\documentclass{llncs}
\usepackage{alltt}
\usepackage[pdftex]{graphicx}
\usepackage{stmaryrd}
\usepackage{amssymb}
\bibliographystyle{splncs}

\pdfinfo{/Author (David Van Horn and Harry G. Mairson)
         /Title (Flow analysis, linearity, and PTIME)
         /Keywords (complexity, flow analysis, 0CFA, simple closure analysis)}

\newcommand\True{\ensuremath{\mathit{True}}}
\newcommand\False{\ensuremath{\mathit{False}}}
\newcommand\TT{\ensuremath{\mathit{tt}}}
\newcommand\FF{\ensuremath{\mathit{ff}}}
\newcommand\Not{\ensuremath{\mathit{Not}}}
\newcommand\Copy{\ensuremath{\mathit{Copy}}}
\newcommand\And{\ensuremath{\mathit{And}}}

\newcommand{\loves}{\ensuremath{\vdash}}

\newcommand\unknown{\ensuremath{\mathit{unknown}}}
\newcommand\fv[1]{\ensuremath{\mathbf{fv}(#1)}}
\newcommand\lab[1]{\ensuremath{\mathbf{lab}(#1)}}
\newcommand\dom[1]{\ensuremath{\mathbf{dom}(#1)}}
\newcommand\ev[1]{\ensuremath{\mathcal{E}\sem{#1}}}

\newcommand\av[1]{\ensuremath{\mathcal{A}\sem{#1}}}
\newcommand\sem[1]{\ensuremath{\llbracket #1 \rrbracket}}
\newcommand\nc{{\sc nc}}
\newcommand\ptime{{\sc ptime}}
\newcommand\logspace{{\sc logspace}}
\newcommand\cache{\widehat{\mathsf{C}}}

\newcommand\Lab{\mathbf{Lab}}

\newcommand\Term{\mathbf{Term}}

\newcommand\Exp{\mathbf{Exp}}
\newcommand\Env{\mathbf{Env}}

\newcommand\Var{\mathbf{Var}}

\newcommand\restrict{\ensuremath{\!\upharpoonright\!}}

\title{Flow analysis, linearity, and PTIME}
\author{David {Van Horn} \and Harry G.~Mairson}
\institute{
Department of Computer Science\\
Brandeis University\\
Waltham, Massachusetts 02454\\
\email{\{dvanhorn,mairson\}@cs.brandeis.edu}
}

\begin{document}

\mainmatter
\maketitle

\begin{abstract}

Flow analysis is a ubiquitous and much-studied component of compiler
technology---and its variations abound.  Amongst the most well known is
Shivers' 0CFA; however, the best known algorithm for 0CFA requires
time cubic in the size of the analyzed program and is unlikely to be
improved.  Consequently, several analyses have been designed to
approximate 0CFA by trading precision for faster computation.
Henglein's simple closure analysis, for example, forfeits the notion
of directionality in flows and enjoys an ``almost linear'' time
algorithm.  But in making trade-offs between precision and complexity,
what has been given up and what has been gained?  Where do these
analyses differ and where do they coincide?

We identify a core language---the linear $\lambda$-calculus---where
0CFA, simple closure analysis, and many other known approximations or
restrictions to 0CFA are rendered identical.  Moreover, for this core
language, analysis corresponds with (instrumented) evaluation.
Because analysis faithfully captures evaluation, and because the
linear $\lambda$-calculus is complete for \ptime, we derive
\ptime-completeness results for all of these analyses.
\end{abstract}



\section{Introduction}
\label{sec:intro}

Flow analysis \cite{jones-81,sestoft-88,shivers-phd,midtgaard-07} is
concerned with providing sound approximations to the question of
``does a given value flow into a given program point during
evaluation?''  The most approximate analysis will always answer {\em
yes}, which takes no resources to compute---and is of little use.  On
the other hand, the most precise analysis will answer {\em yes} if and
only if the given value flows into the program point during
evaluation, which is useful, albeit uncomputable.  In mediating
between these extremes, every static analysis necessarily gives up
valuable information for the sake of computing an answer within
bounded resources.  Designing a static analyzer, therefore, requires
making trade-offs between precision and complexity. But what exactly
is the analytic relationship between forfeited information and
resource usage for any given design decision?  In other words:
\begin{center}
{\em What are the computationally potent ingredients in a
static analysis?}
\end{center}

The best known algorithms to compute Shivers' 0CFA \cite{shivers-88},
a canonical flow analysis for higher-order programs, are cubic in the
size of the program, and there is strong evidence to suggest that in
general, this cannot be improved \cite{heintze-mcallester-97b}.
Nonetheless, information can be given up in the service of quickly
computing a necessarily less precise analysis, avoiding the ``cubic
bottleneck.''  For example, by forfeiting 0CFA's notion of
directionality, algorithms for Henglein's simple closure analysis run
in near linear time \cite{henglein92d}.  Similarly, by explicitly
bounding the number of passes the analyzer is allowed over the
program, as in Ashley and Dybvig's sub-0CFA \cite{ashley-dybvig}, we
can recover running times that are linear in the size of the program.
But the question remains: Can we do better?  For example, is it
possible to compute these less precise analyses in logarithmic space?
We show that without profound combinatorial breakthroughs (\ptime\ $=$
\logspace), the answer is no.  Simple closure analysis, sub-0CFA, and
other analyses that approximate or restrict 0CFA, {\em require}---and
are therefore, like 0CFA \cite{vanhorn-mairson-07}, complete
for---polynomial time.

\paragraph{What is flow analysis?}

Flow analysis is {\em the} ubiquitous static analysis of higher-order
programs.  As Heintze and McAllester remark, some form of flow
analysis is used in most forms of analysis for higher-order languages
\cite{heintze-mcallester-97a}.  It answers fundamental questions such
as {\em what values can a given subexpression possibly evaluate to at
run-time?}  Each subexpression is identified with a unique
superscripted {\em label} $\ell$, which serves to index that program
point.  The result of a flow analysis is a {\em cache} $\cache$ that
maps program points (and variables) to sets of values.  These analyses
are {\em conservative} in the following sense: if $v$ is in
$\cache(\ell)$, then the subexpression label $\ell$ {\em may} evaluate
to $v$ when the program is run (likewise, if $v$ is in $\cache(x)$,
$x$ may be bound to $v$ when the program is run).  But if $v$ is {\em
not} in $\cache(\ell)$, we know that $e$ {\em cannot} evaluate to $v$
and conversely if $e$ evaluates to $v$, $v$ {\em must} be in
$\cache(\ell)$.

For the purposes of this paper and all of the analyses considered
herein, values are (possibly open) $\lambda$-abstractions.  During
evaluation, functional values are denoted with {\em closures}---a
$\lambda$-abstraction together with an {\em environment} that closes
it.  Values considered in the analysis approximate run-time
denotations in the sense that if a subexpression labeled $\ell$
evaluates to the closure $\langle \lambda x.e,\rho\rangle$, then
$\lambda x.e$ is in $\cache(\ell)$.

The {\em acceptability} of a flow analysis is often specified as a set
of (in)equations on program fragments.  The most naive way to compute
a satisfying analysis is to initialize the cache with the flow sets
being empty.  Successive passes are then made over the program,
monotonically updating the cache as needed, until the least fixed
point is reached.  The more approximate the analysis, the faster this
algorithm converges on a fixed point.  The key to a fruitful analysis,
then, is ``to accelerate the analysis without losing too much
information'' \cite{ashley-dybvig}.

One way to realize the computational potency of a static analysis is
to subvert this loss of information, making the analysis an {\em
exact} computational tool.  Proving lower bounds on the expressiveness
of an analysis becomes an exercise in hacking, armed with this
newfound tool.  Clearly the more approximate the analysis, the less
there is to work with, computationally speaking, and the more we have
to do to undermine the approximation.  But a fundamental technique
allows us to understand expressivity in static analysis: {\em
linearity}.  This paper serves to demonstrate that linearity renders a
number of the most approximate flow analyses both equivalent and
exact.

\paragraph{Linearity and approximation in static analysis:}

Linearity, the Curry-Howard programming counterpart of linear logic
\cite{girard-ll}, plays an important role in understanding static
analyses.  The reason is straightforward: because static analysis has
to be tractable, it typically approximates normalization, instead of
simulating it, because running the program may take too long.  For
example, in the analysis of simple types---surely a kind of static
analysis---the approximation is that all occurrences of a bound
variable must have the same type (as a consequence, perfectly good
programs are rejected).  A comparable but not identical thing happens
in the case of type inference for ML and bounded-rank intersection
types---but note that when the program is linear, there is no
approximation, and type inference becomes evaluation under another
name.

In the case of flow analysis, similarly, a cache is computed in the
course of an approximate evaluation, which is only an approximation
because each evaluation of an abstraction body causes the same piece
of the cache to be (monotonically) updated.  Again, if the term is
linear, then there is only one evaluation of the abstraction body, and
the flow analysis becomes synonymous with normalization.

\paragraph{Organization of this paper:}

The next section introduces 0CFA in order to provide intuitions and a
point of reference for comparisons with subsequent analyses.  Section
\ref{sec:simple} specifies and provides an algorithm for computing
Henglein's simple closure analysis. Section \ref{sec:linearity}
develops a correspondence between evaluation and analysis for linear
programs.  We show that when the program is linear, normalization
and analysis are synonymous.  As a consequence the normal form of a
program can be {\em read back} from the analysis.  We then show in
Section \ref{sec:circuits} how to simulate circuits, the canonical
\ptime-hard problem, using linear terms.  This establishes the
\ptime-hardness of the analysis.  Finally, we discuss other
monovariant flow analyses and sketch why these analyses remain
complete for \ptime\ and provide conclusions and perspective.

\section{Shivers' 0CFA}
\label{sec:0cfa}

As a starting point, we consider Shivers' 0CFA
\cite{shivers-88,shivers-phd}.\footnote{It should be noted that
Shivers' original {\em zeroth-order control-flow analysis} (0CFA) was
developed for a core CPS-Scheme language, whereas the analysis
considered here is in direct-style.  Sestoft independently developed a
similar flow analysis in his work on globalization
\cite{sestoft-88,sestoft-89}.  See \cite{midtgaard-07,mossin-phd} for
details.}

\paragraph{The Language:} 
A countably infinite set of labels, which include variable names, is
assumed.  The syntax of the language is given by the following
grammar:
\begin{displaymath}
\begin{array}{l@{\hspace{1cm}}l@{\hspace{1cm}}l}
\Exp & e ::= t^\ell & \mbox{expressions (or labeled terms)}\\
\Term & t ::= x\ |\ e\;e\ |\ \lambda x.e & \mbox{terms (or unlabeled expressions)}
\end{array}
\end{displaymath}

All of the syntactic categories are implicitly understood to be
restricted to the finite set of terms, labels, variables, etc.~that
occur in the {\em program of interest}---the program being analyzed.
The set of labels, which includes variable names, in a program
fragment is denoted $\lab{e}$.  As a convention, programs are assumed
to have distinct bound variable names.

The result of 0CFA is an {\em abstract cache} that maps each program
point (i.e., label) to a set of $\lambda$-abstractions which
potentially flow into this program point at run-time:
\begin{displaymath}
\begin{array}{ccll}
\cache & \in & \Lab \rightarrow
\mathcal{P}(\Term) & \qquad\mbox{abstract caches}
\end{array}
\end{displaymath}
Caches are extended using the notation $\cache[\ell \mapsto s]$, and
we write $\cache[\ell \mapsto^+ s]$ to mean $\cache[\ell \mapsto
(s\cup \cache(\ell))]$.  It is convenient to sometimes think of caches
as mutable tables (as we do in the algorithm below), so we abuse
syntax, letting this notation mean both functional extension and
destructive update.  It should be clear from context which is implied.

\paragraph{The Analysis:} We present the specification of the analysis here in the style of
Nielson {\em et al.} \cite{nielson-nielson-hankin}.  Each
subexpression is identified with a unique superscripted {\em label}
$\ell$, which marks that program point; $\cache(\ell)$ stores all
possible values flowing to point $\ell$.  An {\em acceptable} flow
analysis for an expression $e$ is written $\cache\models e$:
\begin{eqnarray*}
\cache &\models & 
x^\ell \mbox{ iff } \cache(x) \subseteq \cache(\ell)\\
\cache &\models & 
(\lambda x .e)^\ell \mbox{ iff } \lambda x.e \in \cache(\ell)\\
\cache &\models & 
(t_1^{\ell_1}\ t_2^{\ell_2})^\ell \mbox{ iff } \cache\models t_1^{\ell_1} \wedge
\cache\models t_2^{\ell_2} \wedge \forall \lambda x.t_0^{\ell_0} \in \cache(\ell_1) : \\
\ & \ & \quad
  \cache\models t_0^{\ell_0} \wedge \cache(\ell_2) \subseteq \cache(x) \wedge
  \cache(\ell_0) \subseteq \cache(\ell)
\end{eqnarray*}
The $\models$ relation needs to be coinductively defined since
verifying a judgment $\cache \models e$ may obligate verification of
$\cache \models e'$ which in turn may require verification of $\cache
\models e$.  The above specification of acceptability, when read as a
table, defines a functional, which is monotonic,
has a fixed point, and $\models$ is defined coinductively as the
greatest fixed point of this functional.\footnote{See
\cite{nielson-nielson-hankin} for a thorough discussion of coinduction
in specifying static analyses.}

Writing $\cache \models t^\ell$ means ``the abstract cache $\cache$
contains all the flow information for program fragment $t$ at program
point $\ell$.''  The goal is to determine the {\em least} cache
solving these constraints to obtain the most precise analysis. Caches
are partially ordered with respect to the program of interest:
\begin{displaymath}
\cache \sqsubseteq \cache' \mbox{ iff } \forall \ell : \cache(\ell) \subseteq \cache'(\ell).
\end{displaymath}
Since we are concerned only with the least cache (the most precise
analysis) we refer to this as {\em the} cache, or synonymously, {\em
the} analysis.

\paragraph{The Algorithm:}

These constraints can be thought of as an abstract evaluator---
$\cache \models t^\ell$ simply means {\em evaluate} $t^\ell$, which
serves {\em only} to update an (initially empty) cache.

\begin{displaymath}
\begin{array}{lcl}
\av{x^\ell}               & = & \cache[\ell \mapsto^+ \cache(x)]\\
\av{(\lambda x.e)^\ell} & = & \cache[\ell \mapsto \{ \lambda x.e \}]\\
\av{(t^{\ell_1}_1\ t^{\ell_2}_2)^\ell} & = & \av{t^{\ell_1}_1};\ \ \av{t^{\ell_2}_2};\\
\ &\ & \mbox{\bf for each }\lambda x.t^{\ell_0}_0 \mbox{\bf\ in } \cache(\ell_1)\mbox{\bf\ do }\\
\ &\ &   \quad   \cache[x \mapsto^+ \cache(\ell_2)];\ \ 
                 \av{t^{\ell_0}_0};\ \
                 \cache[\ell \mapsto^+\cache(\ell_0)]
\end{array}
\end{displaymath}

The abstract evaluator $\av\cdot$ is iterated until the finite cache
reaches a fixed point.\footnote{\label{fn:fineprint}A single iteration
of $\av{e}$ may in turn make a recursive call $\av{e}$ with no change
in the cache, so care must be taken to avoid looping.  This amounts to
appealing to the coinductive hypothesis $\cache \models e$ in
verifying $\cache \models e$.  However, we consider this inessential
detail, and it can safely be ignored for the purposes of obtaining our
main results in which this behavior is never triggered.}  Since the
cache size is polynomial in the program size, so is the running time,
as the cache is {\em monotonic}---we put values in, but never take
them out.  Thus the analysis and any decision problems answered by the
analysis are clearly computable within polynomial time.

\paragraph{An Example:} 

Consider the following program, which we will return to discuss
further in subsequent analyses:

\begin{displaymath}
((\lambda f.((f^1 f^2)^3(\lambda y.y^4)^5)^6)^7 (\lambda
x.x^8)^9)^{10}
\end{displaymath}

The 0CFA is given by the following cache:
\begin{displaymath}
\begin{array}{l@{\:=\:}l@{\hspace{10pt}}l@{\:=\:}l}
\cache(1) & \{ \lambda x \} &
\cache(6) & \{ \lambda x, \lambda y \} \\
\cache(2) & \{ \lambda x \} &
\cache(7) & \{ \lambda f \} \\
\cache(3) & \{ \lambda x, \lambda y \} &
\cache(8) & \{ \lambda x, \lambda y \} \\
\cache(4) & \{ \lambda y \} &
\cache(9) & \{ \lambda x \} \\
\cache(5) & \{ \lambda y \} &
\cache(10)& \{ \lambda x, \lambda y \}
\end{array}
\hspace{10pt}
\begin{array}{l@{\:=\:}l}
\cache(f) & \{ \lambda x \} \\
\cache(x) & \{ \lambda x, \lambda y \}\\
\cache(y) & \{ \lambda y \}
\end{array}
\end{displaymath}
where we write $\lambda x$ as shorthand for $\lambda x.x^8$, etc.

\section{Henglein's simple closure analysis}
\label{sec:simple}

Simple closure analysis follows from an observation by Henglein some
15 years ago: he noted that the standard flow analysis can be computed
in dramatically less time by changing the specification of flow
constraints to use equality rather than containment
\cite{henglein92d}.  The analysis bears a strong resemblance to simple
typing: analysis can be performed by emitting a system of equality
constraints and then solving them using {\em unification}, which can
be computed in almost linear time with a union-find datastructure.

Consider a program with both $(f\;x)$ and $(f\;y)$ as subexpressions.
Under 0CFA, whatever flows into $x$ and $y$ will also flow into the
formal parameter of all abstractions flowing into $f$, but it is not
necessarily true that whatever flows into $x$ {\em also} flows into
$y$ and {\em vice versa}.  However, under simple closure analysis,
this is the case.  For this reason, flows in simple closure analysis
are said to be {\em bidirectional}.

\paragraph{The Analysis:} 

\begin{eqnarray*}
\cache &\models & 
x^\ell \mbox{ iff } \cache(x) = \cache(\ell)\\
\cache &\models & 
(\lambda x .e)^\ell \mbox{ iff } \lambda x.e \in \cache(\ell)\\
\cache &\models & 
(t_1^{\ell_1}\ t_2^{\ell_2})^\ell \mbox{ iff } \cache\models t_1^{\ell_1} \wedge
\cache\models t_2^{\ell_2} \wedge \forall \lambda x.t_0^{\ell_0} \in \cache(\ell_1) : \\
\ & \ & \quad
  \cache\models t_0^{\ell_0} \wedge \cache(\ell_2) = \cache(x) \wedge
  \cache(\ell_0) = \cache(\ell)
\end{eqnarray*}

\paragraph{The Algorithm:} 

We write $\cache[\ell \leftrightarrow \ell']$ to mean
$\cache[\ell\mapsto^+ \cache(\ell')][\ell'\mapsto^+ \cache(\ell)]$.

\begin{displaymath}
\begin{array}{lcl}
\av{x^\ell}               & = & \cache[\ell \leftrightarrow x]\\
\av{(\lambda x.e)^\ell} & = & \cache[\ell \mapsto^+ \{ \lambda x.e \}]\\
\av{(t^{\ell_1}_1 t^{\ell_2}_2)^\ell} & = & \av{t^{\ell_1}_1};\ \ \av{t^{\ell_2}_2};\\
\ &\ & \mbox{\bf for each }\lambda x.t^{\ell_0}_0 \mbox{\bf\ in } \cache(\ell_1)\mbox{\bf\ do }\\
\ &\ &   \quad   \cache[x\leftrightarrow\ell_2];\ \ 
                 \av{t^{\ell_0}_0};\ \
                 \cache[\ell\leftrightarrow\ell_0]
\end{array}
\end{displaymath}
The abstract evaluator $\av{\cdot}$ is iterated until a fixed point is
reached.\footnote{The fine print of footnote \ref{fn:fineprint}
applies as well.}  By similar reasoning to that given for 0CFA, simple
closure analysis is clearly computable within polynomial time.

\paragraph{An Example:}

Recall the example program of the previous section:
\begin{displaymath}
((\lambda f.((f^1 f^2)^3(\lambda y.y^4)^5)^6)^7 (\lambda
x.x^8)^9)^{10}
\end{displaymath}

Notice that $\lambda x.x$ is applied to itself and then to $\lambda
y.y$, so $x$ will be bound to both $\lambda x.x$ and $\lambda y.y$,
which induces an equality between these two terms.  Consequently,
wherever 0CFA gives a flow set of $\{ \lambda x \}$ or $\{ \lambda y
\}$, simple closure analysis will give $\{ \lambda x, \lambda y \}$.
The simple closure analysis is given by the following cache (new flows
are underlined):
\begin{displaymath}
\begin{array}{l@{\:=\:}l@{\hspace{10pt}}l@{\:=\:}l}
\cache(1) & \{ \lambda x, \underline{\lambda y} \} &
\cache(6) & \{ \lambda x, \lambda y \} \\
\cache(2) & \{ \lambda x, \underline{\lambda y} \} &
\cache(7) & \{ \lambda f \} \\
\cache(3) & \{ \lambda x, \lambda y \} &
\cache(8) & \{ \lambda x, \lambda y \} \\
\cache(4) & \{ \lambda y, \underline{\lambda x} \} &
\cache(9) & \{ \lambda x, \underline{\lambda y} \} \\
\cache(5) & \{ \lambda y, \underline{\lambda x} \} &
\cache(10)& \{ \lambda x, \lambda y \}
\end{array}
\hspace{10pt}
\begin{array}{l@{\:=\:}l}
\cache(f) & \{ \lambda x, \underline{\lambda y} \} \\
\cache(x) & \{ \lambda x, \lambda y \}\\
\cache(y) & \{ \lambda y, \underline{\lambda x} \}
\end{array}
\end{displaymath}

\section{Linearity and normalization}
\label{sec:linearity}

In this section, we show that when the program is {\em linear}---every
bound variable occurs exactly once---analysis and normalization are
synonymous.

First, consider an evaluator for our language, $\ev\cdot$:
\begin{displaymath}
\ev\cdot : \Exp \rightarrow \Env \rightharpoonup \langle \Term, \Env\rangle
\end{displaymath}

\begin{displaymath}
\begin{array}{lcl}
\ev{x^\ell}[x \mapsto c]               & = & c\\
\ev{(\lambda x.e)^\ell}\rho & = & \langle\lambda x.e,\rho\rangle \\
\ev{(e_1\ e_2)^\ell}\rho & = &
       \mbox{\bf let }\langle\lambda x.e_0,\rho'\rangle = 
                      \ev{e_1}\rho\restrict\fv{e_1}\mbox{\bf\ in }\\
\ &\ & \mbox{\bf let }c = 
                      \ev{e_2}\rho\restrict\fv{e_2}\mbox{\bf\ in }\\
\ &\ &   \quad   \ev{e_0}{\rho'[x \mapsto c]}
\end{array}
\end{displaymath}
We use $\rho$ to range over {\em environments}, $\Env = \Var
\rightharpoonup \langle\Term,\Env\rangle$, and let $c$ range over {\em
closures}, each comprising a term and an environment that closes the
term.  The function $\lab{\cdot}$ is extended to closures and
environments by taking the union of all labels in the closure or in
the range of the environment, respectively.



Notice that the evaluator ``tightens'' the environment in the case of
an application, thus maintaining throughout evaluation that the domain
of the environment is exactly the set of free variables in the
expression.  When evaluating a variable occurrence, there is only one
mapping in the environment: the binding for this variable. Likewise,
when constructing a closure, the environment does not need to be
restricted: it already is.

In a linear program, each mapping in the environment corresponds to
the single occurrence of a bound variable.  So when evaluating an
application, this tightening {\em splits} the environment $\rho$ into
$(\rho_1,\rho_2)$, where $\rho_1$ closes the operator, $\rho_2$ closes
the operand, and $\dom{\rho_1} \cap \dom{\rho_2} = \emptyset$.


\begin{definition}
Environment $\rho$ {\em linearly closes} $t$ (or $\langle
t,\rho\rangle$ is a {\em linear closure}) iff $t$ is linear, $\rho$
closes $t$, and for all $x\in\dom{\rho}$, $x$ occurs exactly once
(free) in $t$, $\rho(x)$ is a linear closure, and for all
$y\in\dom{\rho}, x$ does not occur (free or bound) in $\rho(y)$. The
{\em size} of a linear closure $\langle t,\rho\rangle$ is defined as:
\begin{eqnarray*}
|t,\rho| & = & |t|+|\rho|\\
|x| & = & 1\\
|(\lambda x.t^\ell)| & = & 1+|t|\\
|(t_1^{\ell_1}\; t_2^{\ell_2})| & = & 1+|t_1|+|t_2|\\
|[x_1\mapsto c_1,\dots,x_n\mapsto c_n]| & = & n+\sum_i |c_i|
\end{eqnarray*}
\end{definition}

The following lemma states that evaluation of a linear closure cannot
produce a larger value.  This is the environment-based analog to the
easy observation that $\beta$-reduction {\em strictly} decreases the
size of a linear term.
\begin{lemma}\label{lem:smaller}
If $\rho$ linearly closes $t$ and $\ev{t^\ell}\rho = c$, then
$|c|\leq|t,\rho|$.
\end{lemma}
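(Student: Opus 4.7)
The plan is to prove this by induction on the structure of the derivation of $\ev{t^\ell}\rho = c$ (equivalently, on the size of $\langle t,\rho\rangle$, since evaluation is size-decreasing — which is exactly what we are proving). The base cases are immediate: for a variable $x^\ell$, linearity forces $\rho = [x\mapsto c]$, so $|t,\rho| = 1 + 1 + |c| \geq |c|$; for an abstraction $(\lambda x.e)^\ell$, the returned closure $\langle\lambda x.e,\rho\rangle$ has size exactly $|t,\rho|$, so equality holds.

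The application case is where the argument lives. Suppose $t = (e_1\; e_2)$ and $\rho$ linearly closes $t$. Because $t$ is linear, each bound variable occurs exactly once, so the free-variable sets $\fv{e_1}$ and $\fv{e_2}$ are disjoint and together exhaust $\dom\rho$. Hence the restrictions $\rho_1 = \rho\restrict\fv{e_1}$ and $\rho_2 = \rho\restrict\fv{e_2}$ partition $\rho$, giving the key bookkeeping identity
\begin{displaymath}
|e_1,\rho_1| + |e_2,\rho_2| \;=\; |e_1|+|e_2|+|\rho| \;=\; |t,\rho| - 1.
\end{displaymath}
Moreover, $\rho_1$ linearly closes $e_1$ and $\rho_2$ linearly closes $e_2$, so the induction hypothesis applies to the two recursive calls, yielding $|\lambda x.e_0,\rho'|\leq |e_1,\rho_1|$ and $|c_2|\leq |e_2,\rho_2|$, where $\langle\lambda x.e_0,\rho'\rangle$ and $c_2$ are the values of the operator and operand respectively.

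For the third recursive call $\ev{e_0}{\rho'[x\mapsto c_2]}$, I must first verify that $\rho'[x\mapsto c_2]$ linearly closes $e_0$: linearity of the original term guarantees that $x$ occurs exactly once in $e_0$, the disjointness conditions in the definition of linear closure are preserved because $x\notin\dom{\rho'}$ (the abstraction was well-formed) and because $c_2$ was built from $\rho_2$, whose variables do not appear in $\rho_1$ and hence not in $\rho'$. Then by induction $|c|\leq |e_0,\rho'[x\mapsto c_2]| = |e_0|+|\rho'|+1+|c_2| = |\lambda x.e_0,\rho'|+|c_2|$. Chaining the three estimates gives
\begin{displaymath}
|c| \;\leq\; |\lambda x.e_0,\rho'| + |c_2| \;\leq\; |e_1,\rho_1| + |e_2,\rho_2| \;=\; |t,\rho|-1 \;<\; |t,\rho|.
\end{displaymath}

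The main obstacle is the side condition that $\rho'[x\mapsto c_2]$ is a linear closure of $e_0$; this is really the heart of the lemma and requires careful use of the non-occurrence clause in the definition (no variable in $\dom\rho$ appears in the range of $\rho$). Once that invariant is nailed down, the size arithmetic is routine and in fact yields the slightly stronger statement that evaluation of a non-value strictly decreases size, the direct analog of $\beta$-reduction shrinking linear terms.
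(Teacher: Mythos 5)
Your proof is correct and follows the same route as the paper's, which simply states that the result is ``straightforward by induction on $|t,\rho|$, reasoning by case analysis on $t$,'' with strict decrease in the variable and application cases and equality in the abstraction case. You go further than the paper by making the size bookkeeping explicit and by flagging the real content---that evaluation must preserve the linear-closure invariant so the inductive hypothesis applies to $\ev{e_0}{\rho'[x\mapsto c_2]}$; to be fully rigorous one would fold ``$c$ is a linear closure'' into the induction hypothesis, but this is exactly the detail the paper also elides.
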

\begin{proof}
Straightforward by induction on $|t,\rho|$, reasoning by case analysis
on $t$.  Observe that the size strictly decreases in the application
and variable case, and remains the same in the abstraction case.  \qed
\end{proof}

\begin{definition}
A cache $\cache$ {\em respects} $\langle t,\rho\rangle$ (written $\cache\loves
t,\rho$) when,
\begin{enumerate}
\item $\rho$ linearly closes $t$,
\item $\forall x \in \dom{\rho} . \rho(x) = \langle t',\rho'\rangle
\Rightarrow \cache(x) = \{t'\} \mbox { and } \cache\loves t',\rho'$, and
\item $\forall \ell \in \lab{t}\setminus\fv{t}, \cache(\ell) = \emptyset$.
\end{enumerate}
\end{definition}
Clearly, $\emptyset\loves t,\emptyset$ when $t$ is closed and linear,
i.e.~$t$ is a linear program.

Assume that the imperative algorithm $\av{\cdot}$ of Section
\ref{sec:simple} is written in the obvious ``cache-passing''
functional style.  

\begin{theorem}\label{thm:main}
If $\cache\loves t,\rho$, $\cache(\ell)=\emptyset$,
$\ell\notin\lab{t,\rho}$, $\ev{t^\ell}{\rho} = \langle
t',\rho'\rangle$, and $\av{t^\ell}{\cache} = \cache'$, then
$\cache'(\ell) = \{ t' \}$, $\cache'\loves t',\rho'$, and
$\cache'\models t^\ell$.
\end{theorem}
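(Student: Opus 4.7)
The plan is to proceed by strong induction on the size $|t,\rho|$ of the linear closure (well-founded by Lemma~\ref{lem:smaller}), doing a case analysis on the shape of $t$. The two easy cases are the base cases. If $t = x$, then by linearity $\dom{\rho} = \{x\}$ and $\cache\loves t,\rho$ forces $\cache(x) = \{t'\}$ where $\rho(x) = \langle t',\rho'\rangle$ and $\cache \loves t',\rho'$; the analyzer performs $\cache[\ell \leftrightarrow x]$, which (since $\cache(\ell)=\emptyset$) yields $\cache'(\ell) = \{t'\}$ and leaves $\cache(x)$ unchanged, so $\cache' \loves t',\rho'$ and $\cache' \models x^\ell$ hold directly. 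If $t = \lambda x.e$, then $\ev{t^\ell}\rho = \langle t,\rho\rangle$ and $\av{t^\ell}\cache$ sets $\cache'(\ell) = \{t\}$; the respect conditions for $\langle t,\rho\rangle$ are inherited, and $\cache' \models (\lambda x.e)^\ell$ is immediate.

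The substantive case is the application $t = (t_1^{\ell_1}\;t_2^{\ell_2})$. Linearity gives a unique split $\rho = \rho_1 \uplus \rho_2$ with $\rho_i$ linearly closing $t_i^{\ell_i}$ and disjoint bound-variable sets. I will invoke the inductive hypothesis three times in the order the analyzer fires: first on $\av{t_1^{\ell_1}}\cache$, obtaining $\cache_1$ with $\cache_1(\ell_1) = \{\lambda x.t_0^{\ell_0}\}$ and $\cache_1 \loves \lambda x.t_0^{\ell_0},\rho_1'$ where $\ev{t_1^{\ell_1}}\rho_1 = \langle \lambda x.t_0^{\ell_0},\rho_1'\rangle$; second on $\av{t_2^{\ell_2}}\cache_1$, obtaining $\cache_2$ with $\cache_2(\ell_2) = \{t_2'\}$ and $\cache_2 \loves t_2',\rho_2'$ where $\ev{t_2^{\ell_2}}\rho_2 = \langle t_2',\rho_2'\rangle$; and third, after the side-effect $\cache_2[x \leftrightarrow \ell_2]$ extends $\cache_2$ with $\cache_3(x) = \{t_2'\}$, on $\av{t_0^{\ell_0}}\cache_3$, yielding $\cache_4$ with $\cache_4(\ell_0) = \{t'\}$ and $\cache_4 \loves t',\rho'$. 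The final step $\cache_4[\ell \leftrightarrow \ell_0]$ gives the required $\cache'(\ell) = \{t'\}$, and $\cache' \models (t_1^{\ell_1}\;t_2^{\ell_2})^\ell$ follows by reading off the four conjuncts of the acceptability clause from the three inductive $\models$-conclusions together with the equalities $\cache'(\ell_2) = \cache'(x)$ and $\cache'(\ell_0) = \cache'(\ell)$ installed by the $\leftrightarrow$ operations.

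The subtle part, and the main obstacle, is to verify that the preconditions of the inductive hypothesis actually hold at each of these three calls. For the size bound, $|t_1^{\ell_1},\rho_1|$ and $|t_2^{\ell_2},\rho_2|$ are strictly smaller than $|t,\rho|$; for the third call, Lemma~\ref{lem:smaller} gives $|\lambda x.t_0^{\ell_0},\rho_1'| \leq |t_1^{\ell_1},\rho_1|$ and $|t_2',\rho_2'| \leq |t_2^{\ell_2},\rho_2|$, so $|t_0^{\ell_0},\rho_1'[x\mapsto \langle t_2',\rho_2'\rangle]| \leq |t,\rho| - 1$. For the ``respect'' preconditions, I need to argue that each cache modification touches only labels and variables that are \emph{fresh} with respect to the next subcomputation: the mutations performed by $\av{t_1^{\ell_1}}$ lie within $\lab{t_1^{\ell_1},\rho_1} \cup \{\ell_1\}$, which by linearity and the hypothesis $\ell \notin \lab{t,\rho}$ is disjoint from $\lab{t_2^{\ell_2},\rho_2} \cup \{\ell_2\}$, so clause~3 of respect is preserved for $\langle t_2^{\ell_2},\rho_2\rangle$; similarly for the third call, whose environment $\rho_1'[x \mapsto \langle t_2',\rho_2'\rangle]$ is a linear closure of $t_0^{\ell_0}$ by disjointness and whose cache $\cache_3$ respects it because clauses~1--3 can be checked pointwise from the two prior respect conclusions and the fresh assignment to $x$.

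Finally, I must verify that the freshness hypothesis $\ell \notin \lab{\cdot,\cdot}$ propagates to each recursive call (it does, since $\ell_1,\ell_2,\ell_0$ and labels inside closures of $\rho_1',\rho_2'$ are all distinct from $\ell$ by assumption and by the invariant that evaluation never manufactures new labels), and that the flow sets written earlier are not overwritten---again a direct consequence of linearity, since each variable and each label is assigned in at most one of the three subcalls. With these disjointness invariants in place, the three conclusions of the theorem fall out of stringing together the inductive outputs.
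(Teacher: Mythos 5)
Your proposal is correct and follows essentially the same route as the paper's proof: induction on the size $|t,\rho|$ of the linear closure, case analysis on $t$, Lemma~\ref{lem:smaller} to justify the inductive call on the redex body, and linearity-based disjointness to show the respect relation and the freshness preconditions survive each cache update. The only (cosmetic) difference is that you thread the cache sequentially through the two operand subcalls and argue disjointness of their footprints, whereas the paper applies the induction hypothesis to both subterms starting from the same cache $\cache$ and then observes the two results decompose as $\cache\cup\cache'_1$ and $\cache\cup\cache'_2$ with $\cache'_1,\cache'_2$ disjoint before taking their union---the same disjointness fact doing the same work.
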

An important consequence is noted in Corollary \ref{cor:normal}.

\begin{proof} By induction on $|t,\rho|$, reasoning by case analysis on $t$.
\begin{itemize}
\item Case $t\equiv x$.

Since $\cache\loves x,\rho$ and $\rho$ linearly closes $x$, thus $\rho
= [x\mapsto \langle t',\rho'\rangle]$ and $\rho'$ linearly closes
$t'$.  By definition,
\begin{eqnarray*}
\ev{x^\ell}\rho &=& \langle t',\rho'\rangle, \mbox{ and}\\
\av{x^\ell}\cache &=& \cache[x\leftrightarrow \ell].
\end{eqnarray*}
Again since $\cache\loves x,\rho$, $\cache(x) = \{t'\}$, with which the
assumption $\cache(\ell)=\emptyset$ implies
\begin{displaymath}
\cache[x\leftrightarrow\ell](x) =
\cache[x\leftrightarrow\ell](\ell) = 
\{t'\},
\end{displaymath}
and therefore $\cache[x\leftrightarrow\ell]\models x^\ell$.  It
remains to show that $\cache[x\leftrightarrow\ell]\loves t',\rho'$.
By definition, $\cache\loves t',\rho'$.  Since $x$ and $\ell$ do not
occur in $t',\rho'$ by linearity and assumption, respectively, it
follows that $\cache[x\mapsto\ell]\loves t',\rho'$ and the case
holds.

\item Case $t\equiv \lambda x.e_0$.

By definition,
\begin{eqnarray*}
\ev{(\lambda x.e_0)^\ell}\rho & = & \langle\lambda x.e_0,\rho\rangle,\\
\av{(\lambda x.e_0)^\ell}\cache & = & \cache[\ell\mapsto^+ \{\lambda x.e_0\}],
\end{eqnarray*}
and by assumption $\cache(\ell) = \emptyset$, so $\cache[\ell\mapsto^+
\{\lambda x.e_0\}](\ell) = \{\lambda x.e_0\}$ and therefore
$\cache[\ell\mapsto^+ \{\lambda x.e_0\}]\models (\lambda x.e_0)^\ell$.
By assumptions $\ell\notin\lab{\lambda x.e_0,\rho}$ and
$\cache\loves\lambda x.e_0,\rho$, it follows that
$\cache[\ell\mapsto^+ \{\lambda x.e_0\}]\loves\lambda x.e_0,\rho$ and
the case holds.

\item Case $t\equiv t_1^{\ell_1}\; t_2^{\ell_2}$. Let
\begin{eqnarray*}
\ev{t_1}\rho\restrict\fv{t_1^{\ell_1}} &=& \langle v_1,\rho_1\rangle = \langle\lambda x.t_0^{\ell_0},\rho_1\rangle,\\
\ev{t_2}\rho\restrict\fv{t_2^{\ell_2}} &=& \langle v_2,\rho_2\rangle,\\
\av{t_1}\cache &=& \cache_1, \mbox{ and}\\
\av{t_2}\cache &=& \cache_2.
\end{eqnarray*}
Clearly, for $i \in \{1,2\}$, $\cache\loves t_i,\rho\restrict\fv{t_i}$ and
\begin{eqnarray*}
1+\sum_i |t_i^{\ell_i},\rho\restrict\fv{t_i^{\ell_i}}| &=& |(t_1^{\ell_1}\;t_2^{\ell_2}),\rho|.
\end{eqnarray*}

By induction, for $i\in\{1,2\} : \cache_i(\ell_i) = \{v_i\},
\cache_i\loves\langle v_i,\rho_i\rangle,$ and $\cache_i\models
t_i^{\ell_i}$.  From this, it is straightforward to observe that
$\cache_1 = \cache \cup \cache'_1$ and $\cache_2 = \cache \cup
\cache'_2$ where $\cache'_1$ and $\cache'_2$ are disjoint.  So let
$\cache_3 = (\cache_1 \cup \cache_2)[x\leftrightarrow \ell_2]$.  It is
clear that $\cache_3\models t_i^{\ell_i}$.  Furthermore,
\begin{eqnarray*}
\cache_3 &\loves& t_0,\rho_1[x\mapsto \langle v_2,\rho_2\rangle],\\
\cache_3(\ell_0) &=& \emptyset,\mbox{ and}\\
\ell_0 &\notin& \lab{t_0,\rho_1[x\mapsto \langle v_2,\rho_2\rangle]}.
\end{eqnarray*}

By Lemma \ref{lem:smaller},
$|v_i,\rho_i| \leq |t_i,\rho\restrict\fv{t_i}|$, therefore 
\begin{eqnarray*}
|t_0,\rho_1[x\mapsto \langle v_2,\rho_2\rangle]| &<& |(t_1^{\ell_1}\;t_2^{\ell_2})|.
\end{eqnarray*}
Let
\begin{eqnarray*}
\ev{t_0^{\ell_0}}\rho_1[x\mapsto \langle v_2,\rho_2\rangle] &=& \langle v',\rho'\rangle,\\
\av{t_0^{\ell_0}}\cache_3 &=& \cache_4,
\end{eqnarray*}
and by induction, $\cache_4(\ell_0) = \{v'\}$, $\cache_4\loves
v',\rho'$, and $\cache_4\models v'$.  Finally, observe that
$\cache_4[\ell\leftrightarrow\ell_0](\ell) =
\cache_4[\ell\leftrightarrow\ell_0](\ell_0) = \{v'\}$,
$\cache_4[\ell\leftrightarrow\ell_0]\loves v',\rho'$, and
$\cache_4[\ell\leftrightarrow\ell_0]\models
(t_1^{\ell_1}\;t_2^{\ell_2})^\ell$, so the case holds.
\end{itemize}
\qed
\end{proof}
We can now establish the correspondence between analysis and
evaluation.

\begin{corollary}\label{cor:normal}
If $\cache$ is the simple closure analysis of a linear program
$t^\ell$, then $\ev{t^\ell}\emptyset = \langle v,\rho'\rangle$ where
$\cache(\ell) = \{v\}$ and $\cache\loves v,\rho'$.
\end{corollary}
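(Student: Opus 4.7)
The plan is to obtain the corollary as a direct specialization of Theorem \ref{thm:main} to the initial state. Taking $\cache = \emptyset$ and $\rho = \emptyset$, the hypotheses of the theorem are immediate: $\emptyset \loves t,\emptyset$ holds because $t^\ell$ is a linear program and hence $t$ is closed and linear, as observed just before the theorem; $\emptyset(\ell) = \emptyset$ is trivial; and $\ell \notin \lab{t}$ follows from the standing convention that labels are distinct throughout the program of interest. Before invoking the theorem I also need $\ev{t^\ell}\emptyset$ to be defined, since the evaluator is partial. This follows by an induction on $|t,\rho|$: using Lemma \ref{lem:smaller} to bound the size of the closures returned by subevaluations, each recursive call in the definition of $\ev{\cdot}$ operates on a strictly smaller linear closure, so evaluation terminates at some $\langle v,\rho'\rangle$.

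Applying Theorem \ref{thm:main} then yields a cache $\cache_0 := \av{t^\ell}\emptyset$ with $\cache_0(\ell) = \{v\}$, $\cache_0 \loves v,\rho'$, and $\cache_0 \models t^\ell$. The remaining step is to identify $\cache_0$ with the simple closure analysis $\cache$ itself. Since $\cache$ is by definition the least cache satisfying $\models$ and $\cache_0$ is one such cache, $\cache \sqsubseteq \cache_0$. The reverse containment is the genuine linearity content of the result: because each bound variable of $t^\ell$ occurs exactly once, the recursion of $\av{\cdot}$ visits each acceptability constraint exactly once and introduces only those flows forced by $\models$, so $\cache_0$ is already the fixed point obtained by iterating $\av{\cdot}$ to convergence, giving $\cache_0 = \cache$. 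Hence $\cache(\ell) = \{v\}$ and $\cache \loves v,\rho'$, as required.

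The main obstacle is this final identification: showing that a single pass of $\av{t^\ell}$ on the empty cache already lands on the least fixed point rather than some strictly larger acceptable cache. This is not so much a new theorem as an observation threaded through the inductive construction of $\cache_0$ in the proof of Theorem \ref{thm:main}, where each cache update is dictated exactly by the structure of the linear evaluation; once checked, the corollary follows immediately.
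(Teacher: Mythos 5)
Your proof is correct and follows the same route the paper intends: the corollary is stated without an explicit proof, as a direct instantiation of Theorem \ref{thm:main} at $\cache = \emptyset$, $\rho = \emptyset$, using the observation made just before the theorem that $\emptyset \loves t,\emptyset$ for a linear program. Your additional care about termination of $\ev{\cdot}$ (via Lemma \ref{lem:smaller}) and about identifying the computed cache $\av{t^\ell}\emptyset$ with the \emph{least} acceptable cache fills in details the paper leaves implicit, but does not constitute a different approach.
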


By a simple replaying of the proof substituting the containment
constraints of 0CFA for the equality constraints of simple closure
analysis, it is clear that the same correspondence can be established,
and therefore 0CFA and simple closure analysis are identical for
linear programs.

\begin{corollary}
If $e$ is a linear program, then $\cache$ is the simple closure
analysis of $e$ iff $\cache$ is the 0CFA of $e$.
\end{corollary}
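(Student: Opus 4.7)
The plan is to replay the proof of Theorem \ref{thm:main} (and hence Corollary \ref{cor:normal}) with 0CFA's containment constraints in place of simple closure analysis's equality constraints, and then conclude by noting that both least caches coincide with the same explicit ``dynamic'' cache. Under 0CFA the algorithm becomes $\av{x^\ell}\cache = \cache[\ell\mapsto^+\cache(x)]$ for variables, and in the application rule the two bidirectional updates become the one-way $\cache[x\mapsto^+\cache(\ell_2)]$ and $\cache[\ell\mapsto^+\cache(\ell_0)]$. The key point is that in each of these updates the target is empty before the update---by the standing assumption $\cache(\ell)=\emptyset$ in the theorem statement, and by linearity everywhere else---so the one-way 0CFA copy yields exactly the same singleton as the bidirectional simple-closure copy.

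Carrying forward the same invariants $\cache\loves t,\rho$, $\cache(\ell)=\emptyset$, and $\ell\notin\lab{t,\rho}$, the variable case follows because $\cache(x)=\{t'\}$ gives $\cache[\ell\mapsto^+\{t'\}](\ell)=\{t'\}$, while $x$ still maps to $\{t'\}$ since linearity forbids other occurrences. The abstraction case is identical to that of Theorem \ref{thm:main}. In the application case linearity splits $\rho$ into disjoint $\rho_1,\rho_2$, the recursive cache extensions are disjoint, the updates at $x$ and $\ell$ act on empty targets, and Lemma \ref{lem:smaller} lets the induction proceed on $t_0$. The resulting 0CFA analog of Corollary \ref{cor:normal} then says: for a linear $t^\ell$ with $\ev{t^\ell}\emptyset=\langle v,\rho'\rangle$, the 0CFA cache $\cache$ also satisfies $\cache(\ell)=\{v\}$ and $\cache\loves v,\rho'$, and more generally the entire cache is pinned down label-by-label by evaluation---a singleton at each label reached during evaluation and empty elsewhere.

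Because the algorithms of Sections \ref{sec:0cfa} and \ref{sec:simple} both compute the least acceptable cache for their respective specifications, and both replayed theorems identify that output with the very same explicit dynamic-value cache, the two least caches coincide. Hence $\cache$ is the 0CFA of a linear $e$ iff $\cache$ is the simple closure analysis of $e$. The only real obstacle is a bookkeeping one---checking that containment never falls short of equality in the linear setting---but linearity makes this transparent: each variable has a unique binder-and-use pair, so each cache entry receives flow from at most one source, and the weaker 0CFA constraint already forces the entry to be as large as the equality-based one.
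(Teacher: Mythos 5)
Your proposal is correct and follows exactly the route the paper takes: the paper's entire proof is the remark that one replays Theorem~\ref{thm:main} with containment in place of equality, which is precisely what you do (and you usefully make explicit the key observation that every update target is empty by linearity, so the one-way and bidirectional updates coincide). No gaps.
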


\paragraph{Discussion:}

Returning to our earlier question of the computationally potent
ingredients in a static analysis, we can now see that when the term is
linear, whether flows are directional and bidirectional is irrelevant.
For these terms, simple closure analysis, 0CFA, and evaluation are
equivalent.  And, as we will see, when an analysis is {\em exact} for
linear terms, the analysis will have a \ptime-hardness bound.

\section{Lower bounds for flow analysis}
\label{sec:circuits}

There are at least two fundamental ways to reduce the complexity of
analysis.  One is to compute more approximate answers, the other is to
analyze a syntactically restricted language.

We use {\em linearity} as the key ingredient in proving lower bounds
on analysis.  This shows not only that simple closure analysis and
other flow analyses are \ptime-complete, but the result is rather
robust in the face of analysis design based on syntactic restrictions.
This is because we are able to prove the lower bound via a highly
restricted programming language---the linear $\lambda$-calculus.  So
long as the subject language of an analysis includes the linear
$\lambda$-calculus, and is exact for this subset, the analysis must be
at least \ptime-hard.

The decision problem answered by flow analysis, described colloquially
in Section \ref{sec:intro}, is formulated as follows:
\begin{description}
\item[Flow Analysis Problem:] Given a closed expression $e$, a term
$v$, and label $\ell$, is $v \in \cache(\ell)$ in the analysis of $e$?
\end{description}

\begin{theorem}
If analysis corresponds to evaluation on linear terms, the analysis is
\ptime-hard.
\end{theorem}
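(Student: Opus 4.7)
The plan is to reduce the Circuit Value Problem, a canonical \ptime-complete problem, to the Flow Analysis Problem via a logspace reduction that produces only linear $\lambda$-terms. Given a Boolean circuit $C$ with a fixed input assignment and a designated output gate, I construct in logarithmic space a closed linear term $M_C$ with a distinguished label $\ell$ such that $C$ evaluates to true iff a particular fixed linear $\lambda$-term $\TT$ lies in $\cache(\ell)$ for the analysis of $M_C$. By the hypothesis---that the analysis corresponds to evaluation on linear terms, exactly as Corollary \ref{cor:normal} guarantees for simple closure analysis---this cache query decides the circuit value, establishing \ptime-hardness.

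The heart of the reduction is a small library of linear combinators for Booleans and gates. Standard Church encodings of $\True$ and $\False$ are not linear, since each discards one of its two arguments, so I instead fix distinct closed linear terms $\TT$ and $\FF$ as Boolean values and implement $\Not$, $\And$, and any other gate of the base as closed linear combinators that consume each input exactly once and return $\TT$ or $\FF$. The delicate ingredient is handling fan-out: a gate's output wire may feed several downstream gates, yet linearity forbids using a bound variable more than once. I exploit the fact that Booleans form a finite data type: the combinator $\Copy$ consumes a single Boolean input by a linear case analysis and in each branch returns a linear pair whose two components can each be used once by downstream subterms. Iterating $\Copy$ produces any required number of copies of any wire.

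Given these building blocks, the reduction itself is syntax-directed. Processing the gates $g_1,\dots,g_n$ of $C$ in topological order, I emit a chain of linear bindings (encoded via immediate $\beta$-redexes). Each binding either applies a gate combinator to the variables carrying its input wires, or applies $\Copy$ to duplicate a wire whose fan-out count exceeds one; input gates contribute immediate occurrences of $\TT$ or $\FF$. The body of the innermost redex is the variable bound to the output gate, carrying the distinguished label $\ell$. The construction needs only counters for fresh variable names and fan-out bookkeeping, and so is computable in logarithmic space. Correctness follows by induction on the gate order: evaluation of $M_C$ from the empty environment reproduces the bottom-up computation of $C$, so $\ev{M_C}\emptyset$ returns $\TT$ or $\FF$ at label $\ell$, and by hypothesis $\cache(\ell)$ is exactly the corresponding singleton.

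The main obstacle is engineering $\Copy$ and the gate combinators so that the complete term $M_C$ remains strictly linear---every bound variable occurring exactly once---while still behaving extensionally like the circuit. Once these combinators are in hand, the reduction itself is routine, and the logspace bound is immediate because $M_C$ is produced by a single linear pass through $C$.
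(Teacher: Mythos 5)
Your overall strategy is the same as the paper's---a logspace reduction from the Circuit Value Problem to the Flow Analysis Problem via a linear encoding of circuits---but there are two genuine gaps. The first is in the statement of the reduction itself. You claim that $C$ evaluates to true iff ``a particular fixed linear $\lambda$-term $\TT$ lies in $\cache(\ell)$.'' This is precisely the formulation that does not work: since the program is linear (and bound variables are distinct), every gate combinator contains its own syntactically distinct occurrences of $\TT$ and $\FF$, each a different term with different labels, and \emph{which} of these occurrences flows to the output label depends on the evaluation of the circuit. The Flow Analysis Problem asks whether one \emph{particular} term $v$ is in $\cache(\ell)$, so there is no single fixed $\TT$ to query. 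The repair is to wrap the output in a context that consumes a Boolean and induces a flow of one designated, known abstraction to one designated label iff the Boolean is true; your proposal omits this step, and without it the reduction does not decide the circuit value.

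The second gap is that the combinator library, which you correctly identify as ``the main obstacle,'' is never actually constructed, and the sketch you give of it is not realizable as stated. There is no ``linear case analysis'' with branches in the multiplicative linear $\lambda$-calculus: a branching construct in which each branch separately returns a value would either duplicate or discard resources. The working construction represents a Boolean as a pair $\langle \mathit{value}, \mathit{complement}\rangle$ of permutations on pairs, so that $\Copy$ is implemented by \emph{applying} the two components to fresh constant pairs (no branching), and $\And$ requires a further trick to dispose of the leftover components $p_2, q_2$ without violating linearity---namely observing that exactly one of them is $\TT$ and the other $\FF$, so $p_2 \compose q_2 \compose \FF = \TT$ and the garbage annihilates itself inside the answer. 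Asserting that gate combinators consuming each input exactly once ``can be engineered'' is exactly the part of the theorem that needs proof; without exhibiting them (or citing a construction), the claim that $M_C$ is linear and extensionally computes $C$ is unsupported.
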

The proof is by reduction from the canonical \ptime-complete problem
\cite{ladner-75}:
\begin{description}
\item[Circuit Value Problem:] Given a Boolean circuit $C$ of $n$
inputs and one output, and truth values $\vec{x} = x_1,\dots,x_n$, is
$\vec{x}$ accepted by $C$?
\end{description}

An instance of the circuit value problem can be compiled, using only
logarithmic space, into an instance of the flow analysis problem
following the construction in \cite{vanhorn-mairson-07}.  Briefly, the
circuit and its inputs are compiled into a linear $\lambda$-term,
which simulates $C$ on $\vec{x}$ via {\em evaluation}---it normalizes
to true if $C$ accepts $\vec{x}$ and false otherwise.  But since the
analysis faithfully captures evaluation of linear terms, and our
encoding is linear, the circuit can be simulated by flow analysis.

The encodings work like this: \TT\ is the identity on pairs, and \FF\
is the swap.  Boolean values are either $\langle\TT,\FF\rangle$ or
$\langle\FF,\TT\rangle$, where the first component is the ``real''
value, and the second component is the complement.  
\begin{displaymath}
\begin{array}{rcl@{\qquad}rcl}
\TT & \equiv & \lambda p . \mbox{let }\langle x,y \rangle = p\mbox{ in } \langle x, y\rangle &
\True & \equiv &\langle \TT,\FF \rangle\\
\FF & \equiv & \lambda p . \mbox{let }\langle x,y \rangle = p\mbox{ in } \langle y, x\rangle &
\False & \equiv &\langle \FF,\TT \rangle
\end{array}
\end{displaymath}

The simplest connective is \Not, which is an inversion on pairs, like
\FF.  A {\em linear} copy connective is defined as:
\begin{eqnarray*}
\Copy & \equiv & \lambda b.\mbox{let }\langle u,v\rangle = b\mbox{ in }
\langle u\langle \TT,\FF\rangle, v\langle \FF,\TT\rangle\rangle.
\end{eqnarray*}
The coding is easily explained: suppose $b$ is \True, then $u$ is identity and $v$
twists; so we get the pair $\langle\True,\True\rangle$.  Suppose $b$
is \False, then $u$ twists and $v$ is identity; we get
$\langle\False,\False\rangle$.  

The \And\ connective is defined as:
\begin{displaymath}
\begin{array}{rcl}
\And & \equiv & \lambda b_1.\lambda b_2.\\
\ & \ & \quad\mbox{let }\langle u_1,v_1\rangle = b_1\mbox{ in}\\
\ & \ & \quad\mbox{let }\langle u_2,v_2\rangle = b_2\mbox{ in}\\
\ & \ & \quad\mbox{let }\langle p_1,p_2\rangle = u_1\langle u_2, \FF\rangle \mbox{ in}\\
\ & \ & \quad\mbox{let }\langle q_1,q_2\rangle = v_1\langle \TT, v_2\rangle \mbox{ in}\\
\ & \ & \qquad\langle p_1, q_1 \circ p_2 \circ q_2 \circ \FF \rangle.
\end{array}
\end{displaymath}

Conjunction works by computing pairs $\langle p_1,p_2\rangle$ and
$\langle q_1,q_2\rangle$.  The former is the usual conjuction on the
first components of the Booleans $b_1,b_2$: $u_1\langle u_2,
\FF\rangle$ can be read as ``if $u_1$ then $u_2$, otherwise false
(\FF).''  The latter is (exploiting deMorgan duality) the disjunction of
the complement components of the Booleans: $v_1\langle\TT,v_2\rangle$
is read as ``if $v_1$ (i.e.~if not $u_1$) then true (\TT), otherwise
$v_2$ (i.e.~not $u_2$).''  The result of the computation is equal to
$\langle p_1,q_1\rangle$, but this leaves $p_2,q_2$ unused, which
would violate linearity.  However, there is symmetry to this {\em
garbage}, which allows for its disposal.  Notice that, while we do not
know whether $p_2$ is \TT\ or \FF\ and similarly for $q_2$, we do know
that {\em one of them is \TT\ while the other is \FF}.  Composing the
two together, we are guaranteed that $p_2 \circ q_2 = \FF$.  Composing
this again with another twist (\FF) results in the identity function
$p_2 \circ q_2 \circ \FF = \TT$.  Finally, composing this with $q_1$
is just equal to $q_1$, so $\langle p_1, q_1 \circ p_2 \circ q_2 \circ
\FF \rangle = \langle p_1, q_1\rangle$, which is the desired result,
but the symmetric garbage has been {\em annihilated}, maintaining
linearity.

This hacking, with its self-annihilating garbage, is an improvement
over that given in \cite{mairson-jfp04} and allows Boolean computation
without K-redexes, making the lower bound stronger, but also
preserving all flows.  In addition, it is the best way to do circuit
computation in multiplicative linear logic, and is how you compute
similarly in non-affine typed $\lambda$-calculus.

We know from Corollary \ref{cor:normal} that normalization and
analysis of linear programs are synonymous, and our encoding of
circuits will faithfully simulate a given circuit on its inputs,
evaluating to true iff the circuit accepts its inputs.  But it does
not immediately follow that the circuit value problem can be reduced
to the flow analysis problem.  Let $||C,\vec{x}||$ be the encoding of
the circuit and its inputs.  It is tempting to think the instance of
the flow analysis problem could be stated:
\begin{center}
is \True\ in $\cache(\ell)$ in the analysis
of $||C,\vec{x}||^\ell$?
\end{center}
The problem with this is there may be many syntactic instances of
``\True.''  Since the flow analysis problem must ask about a particular
one, this reduction will not work.  The fix is to use a context which
expects a boolean expression and induces a particular flow (that can
be asked about in the flow analysis problem) iff that expression
evaluates to a true value \cite{vanhorn-mairson-07}.

\begin{corollary}
Simple closure analysis is \ptime-complete.
\end{corollary}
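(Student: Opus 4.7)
The plan is to establish \ptime-completeness by combining an upper bound with the \ptime-hardness result of the preceding theorem. For containment in \ptime, I would simply revisit the algorithm of Section \ref{sec:simple}: the cache produced by $\av{\cdot}$ maps polynomially many program points and variables to subsets of the polynomially many $\lambda$-abstractions occurring in the source term, each update is monotone in this finite lattice, and each pass costs polynomial time; hence a fixed point is reached after polynomially many passes, and the decision question ``is $v \in \cache(\ell)$?'' is a constant-time table lookup at the end.

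For hardness, I would invoke the immediately preceding theorem: any flow analysis that corresponds to evaluation on linear terms is \ptime-hard. Corollary \ref{cor:normal} already establishes exactly this correspondence for simple closure analysis, so it remains only to spell out the reduction from the Circuit Value Problem. Given $(C,\vec{x})$, the construction cited from \cite{vanhorn-mairson-07}, built on the \TT, \FF, \Not, \Copy, \And\ combinators given above, produces in logarithmic space a linear $\lambda$-term $||C,\vec{x}||$ that normalizes to \True\ iff $C$ accepts $\vec{x}$. Because the Flow Analysis Problem must ask about a single syntactic abstraction at a single label, I would wrap $||C,\vec{x}||$ in a fixed, linear discriminating context $E[\cdot]$ of the form indicated in the paragraph preceding this corollary: $E[\cdot]$ feeds the computed Boolean a pair of uniquely labeled tag abstractions so that a distinguished tag $v^*$ is routed to a designated label $\ell^*$ iff the encoded bit is true. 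The reduction then maps $(C,\vec{x})$ to the Flow Analysis question ``is $v^* \in \cache(\ell^*)$ in the analysis of $E[||C,\vec{x}||]$?''.

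The main obstacle is ensuring that $E[\cdot]$ preserves linearity of the whole program, since Corollary \ref{cor:normal} is what licenses the identification of $\cache(\ell^*)$ with the operational outcome, and a careless wrapper that duplicates or discards a variable would break both linearity and the correspondence. Once $E[\cdot]$ is verified to be linear, closed, and free of relabeling collisions with $||C,\vec{x}||$, correctness of the reduction is immediate from Corollary \ref{cor:normal} applied to $E[||C,\vec{x}||]$: the tag $v^*$ appears at $\ell^*$ in the analysis precisely when it appears there under evaluation, which happens precisely when $C$ accepts $\vec{x}$. The logspace bound on the reduction follows because both $||C,\vec{x}||$ and $E[\cdot]$ are built by a compositional traversal of $C$ using only counters for fresh labels and variable names, which completes the argument.
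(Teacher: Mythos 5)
Your proposal is correct and follows essentially the same route as the paper: the \ptime\ upper bound comes from the monotone, polynomial-size cache computed by $\av{\cdot}$, and hardness comes from instantiating the preceding theorem via Corollary \ref{cor:normal} together with the logspace circuit encoding and the linear discriminating context that pins the flow question to a single syntactic value at a single label. Nothing is missing; this is the argument the paper intends the corollary to summarize.
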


\section{Other monovariant analyses}

In this section, we survey some of the existing monovariant analyses
that either approximate or restrict 0CFA to obtain faster analysis
times.  In each case, we sketch why these analyses are complete for
\ptime.

\subsection{Ashley and Dybvig's sub-0CFA}
\label{sec:sub0}

In \cite{ashley-dybvig}, Ashley and Dybvig develop a general framework
for specifying and computing flow analyses, which can be instantiated
to obtain 0CFA or Jagannathan and Weeks' polynomial $1$CFA
\cite{jagannathan-weeks-95}, for example.  They also develop a class
of instantiations of their framework dubbed {\em sub-0CFA} that is
faster to compute, but less accurate than 0CFA.

This analysis works by explicitly bounding the number of times the
cache can be updated for any given program point.  After this
threshold has been crossed, the cache is updated with a distinguished
$\unknown$ value that represents all possible $\lambda$-abstractions
in the program.  Bounding the number of updates to the cache for any
given location effectively bounds the number of passes over the
program an analyzer must make, producing an analysis that is $O(n)$ in
the size of the program.  Empirically, Ashley and Dybvig observe that
setting the bound to 1 yields an inexpensive analysis with no
significant difference in enabling optimizations with respect to 0CFA.

The idea is the cache gets updated once ($n$ times in general) before
giving up and saying all $\lambda$-abstractions flow out of this
point.  But for a linear term, the cache is only updated at most once
for each program point.  Thus we conclude even when the sub-0CFA bound
is 1, the problem is \ptime-complete.

As Ashley and Dybvig note, for any given program, there exists an
analysis in the sub-0CFA class that is identical to 0CFA (namely by
setting $n$ to the number of passes 0CFA makes over the given
program).  We can further clarify this relationship by noting that for
all linear programs, all analyses in the sub-0CFA class are identical
to 0CFA (and thus simple closure analysis).

\subsection{Subtransitive 0CFA}

Heintze and McAllester \cite{heintze-mcallester-97b} have shown that
the ``cubic bottleneck'' of computing full 0CFA---that is, computing
all the flows in a program---cannot be avoided in general without
combinatorial breakthroughs: the problem is {\sc{2npda}}-hard, for
which the ``the cubic time decision procedure [\dots] has not been
improved since its discovery in 1968.''

Given the unlikeliness of improving the situation in general, Heintze
and McAllester \cite{heintze-mcallester-97a} identify several simpler
flow questions (including the decision problem discussed in the paper,
which is the simplest; answers to any of the other questions imply an
answer to this problem).  They give algorithms for simply typed terms
that answer these restricted flow problems, which under certain
conditions, compute in less than cubic time.

Their analysis is linear with respect to a program's graph, which in
turn, is bounded by the size of the program's type.  Thus, bounding
the size of a program's type results in a linear bound on the running
times of these algorithms.  If this type bound is removed, though, it
is clear that even these simplified flow problems (and their
bidirectional-flow analogs), are complete for \ptime: observe that
every linear term is simply typable, however in our lower bound
construction, the type size is proportional to the size of the circuit
being simulated.  As they point out, when type size is not bounded,
the flow graph may be exponentially larger than the program, in which
case the standard cubic algorithm is preferred.

Independently, Mossin \cite{mossin-98} developed a type-based analysis
that, under the assumption of a constant bound on the size of a
program's type, can answer restricted flow questions such as single
source/use in linear time with respect to the size of the explicitly
typed program.  But again, removing this imposed bound results in
\ptime-completeness.

As Hankin {\em et al.} \cite{hankin-games} point out: both Heintze
and McAllester's and Mossin's algorithms operate on type structure (or
structure isomorphic to type structure), but with either implicit or
explicit $\eta$-expansion.  For simply typed terms, this can result in
an exponential blow-up in type size.  It is not surprising then, that
given a much richer graph structure, the analysis can be computed
quickly.  In this light, recent results on 0CFA of $\eta$-expanded,
simply typed programs can be seen as an improvement of the
subtransitive flow analysis since it works equally well for languages
with first-class control and can be performed with only a fixed number
of pointers into the program structure, i.e.~it is computable in
\logspace\ (and in other words, \ptime\ $=$
\logspace\ up to $\eta$) \cite{vanhorn-mairson-07}.



\section{Conclusions and perspective}


When an analysis is {\em exact}, it will be possible to establish a
correspondence with evaluation.  The richer the language for which
analysis is exact, the harder it will be to compute the analysis.  As
an example in the extreme, Mossin \cite{mossin-97} developed a flow
analysis that is exact for simply typed terms.  The computational
resources that may be expended to compute this analysis are {\em ipso
facto} not bounded by any elementary recursive function
\cite{statman}.  However, most flow analyses do not approach this kind
of expressivity.  By way of comparison, 0CFA only captures \ptime, and
yet researchers have still expending a great deal of effort deriving
approximations to 0CFA that are faster to compute.  But as we have
shown for a number of them, they all coincide on linear terms, and so
they too capture \ptime.



We should be clear about what is being said, and not said.  There is a
considerable difference in practice between linear algorithms
(nominally considered efficient) and cubic algorithms (still feasible,
but taxing for large inputs), even though both are polynomial-time.
\ptime-completeness does not distinguish the two.  But if a
sub-polynomial (e.g., \logspace) algorithm was found for this sort of
flow analysis, it would depend on (or lead to) things we do not know
(\logspace\ $=$ \ptime).  Likewise, were a parallel implementation of
this flow analysis to run in logarithmic time (i.e., \nc), we would
consequently be able to parallelize every polynomial time algorithm
similarly.

A fundamental question we need to be able to answer is this: what can
be deduced about a long-running program with a time-bounded analyzer?
When we statically analyze exponential-time programs with a
polynomial-time method, there should be a analytic bound on what we
can learn at compile-time: a theorem delineating how exponential time
is being viewed through the compressed, myopic lens of polynomial time
computation.

For example, a theorem due to Statman \cite{statman} says this: let
{\bf P} be a property of simply-typed $\lambda$-terms that we would
like to detect by static analysis, where {\bf P} is invariant under
reduction (normalization), and is computable in elementary time
(polynomial, or exponential, or doubly-exponential, or\dots).  Then
{\bf P} is a {\em trivial} property: for any type $\tau$, {\bf P} is
satisfied by {\em all} or {\em none} of the programs of type $\tau$.
Henglein and Mairson \cite{henglein-mairson-popl91} have complemented
these results, showing that if a property is invariant under
$\beta$-reduction for a class of programs that can encode all Turing
Machines solving problems of complexity class {\sc{f}} using
reductions from complexity class {\sc{g}}, then any superset is either
{\sc{f}}-complete or trivial.  Simple typability has this property for
linear and linear affine $\lambda$-terms
\cite{henglein-mairson-popl91,mairson-jfp04}, and these terms are
sufficient to code all polynomial-time Turing Machines.

We would like to prove some analogs of these theorems, with or without
the typing condition, but weakening the condition of ``invariant under
reduction'' to some {\em approximation} analogous to the
approximations of flow analysis, as described above.  We are motivated
as well by yardsticks such as Shannon's theorem from information
theory \cite{shannon}: specify a bandwidth for communication and an
error rate, and Shannon's results give bounds on the channel capacity.
We too have essential measures: the time complexity of our analysis,
the asymptotic differential between that bound and the time bound of
the program we are analyzing.  There ought to be a fundamental result
about what information can be yielded as a function of that
differential.  At one end, if the program and analyzer take the same
time, the analyzer can just run the program to find out everything.
At the other end, if the analyzer does no work (or a constant amount
of work), nothing can be learned.  Analytically speaking, what is in
between?

\paragraph{Acknowledgments:}

We are grateful to Olin Shivers and Matt Might for a long, fruitful,
and ongoing dialogue on flow analysis.  We thank the anonymous
reviewers for insightful comments.  The first author also thanks the
researchers of the Northeastern University Programming Research Lab
for the hospitality and engaging discussions had as a visiting
lecturer over the last year.

\bibliography{simpleclosure}
\end{document}